\documentclass[11pt]{article}
\usepackage{a4,amsmath,amssymb,amsthm,amscd,cite,graphicx}
\usepackage{rotating}
\usepackage[table]{xcolor} 
\usepackage{mathtools, booktabs, caption, float, multirow, comment} 
\usepackage{fixmath, geometry}
 
\usepackage[hidelinks]{hyperref} 
\usepackage{cleveref}
\usepackage{tikz}
\usetikzlibrary{intersections,calc,matrix,arrows, decorations.markings,shapes}
\usetikzlibrary{arrows.meta}
\usepackage{multirow}
\newtheorem{theorem}{Theorem}[section]
\usepackage{longtable}
\definecolor{mColor1}{rgb}{0.9,0.9,0.9}  
\definecolor{mColor2}{gray}{0.8}
\definecolor{mColor3}{gray}{1.0}

\newtheorem{lemma}[theorem]{Lemma}
\newtheorem{remark}[theorem]{Remark}
\newtheorem{proposition}[theorem]{Proposition}
\newtheorem{definition}[theorem]{Definition}

\newtheorem{fact}[theorem]{Fact}

\newcommand{\be}{\begin{equation}}
\newcommand{\ee}{\end{equation}}
\newcommand{\bea}{\begin{eqnarray}}
\newcommand{\eea}{\end{eqnarray}}

\catcode`@=11 \@addtoreset{equation}{section} \catcode`@=12

\begin{document}

\begin{titlepage}
\begin{flushright}
 (v 1.0) February 2026\\
\end{flushright}
\begin{center}
\textsf{\large S-matrices in the holomorphic modular bootstrap approach}\\[12pt]
Suresh Govindarajan$^{\dagger}$,  Aditya Jain$^{*}$ and  Akhila Sadanandan$^{\ddag}$ \\[3pt]

Email: $^\dagger$suresh@physics.iitm.ac.in, \\ $^*${adityaj2807@gmail.com }
$^\ddag$akhila@physics.iitm.ac.in \\[6pt]
Department of Physics,\\
Indian Institute of Technology Madras\\
Chennai 600036, India \\[6pt]
and \\
Centre for Operator Algebras, Geometry, Matter and Spacetime, \\Indian Institute of Technology Madras, Chennai 600036 India
\end{center}
\begin{center}
\mbox{} \\
Appendix by 
Abhiram Kidambi \\
Max Planck Institute for Mathematics in the Sciences,\\
 Inselstra\ss e 22, 04103 Leipzig, Germany \\
 and \\
  AEI Potsdam, Am Mühlenberg 1, 14476 Potsdam, Germany \\
Email address: kidambi@mis.mpg.de
\end{center}
\begin{abstract}
We numerically determine the S-matrix by using connection formulae in the modular linear differential equation (MLDE) approach to the holomorphic modular bootstrap. We then determine exact formulae using the fact that entries in the $S$-matrix are integer entries in a cyclotomic extension of the field of rational numbers. This provides a method that is intrinsic to the MLDE setup and does not require inputs outside this framework. The method is illustrated with a selection of examples.
\end{abstract}
\end{titlepage}

\section{Introduction}

The key idea in the holomorphic modular bootstrap to rational conformal field theories (RCFTs) is to view the Virasoro characters of the RCFT as independent solutions to a modular linear differential equation (MLDE) of order equal to the number of characters\cite{Mathur:1988na,Mathur:1988gt,Naculich:1988xv}.  A second parameter called the Wronskian index then determines the MLDE up to a certain number of parameters. The goal in the holomorphic modular bootstrap is to determine the values of parameters that lead to solutions that have $q$-series with non-negative integral coefficients -- such solutions are called \textit{admissible}. This is a diophantine problem that is hard to solve without additional inputs and has been used successfully for the theories with three characters or less and small values of Wronskian index\cite{Chandra:2018pjq,Franc:2020,Das:2021uvd,Gowdigere:2023xnm}. 

The set of possible exponents (modulo one) have been obtained for theories with up to five characters\cite{Kaidi:2021ent}. Clearly this determines the $T$-matrix of the RCFT.  This provides additional inputs to the MLDE approach thereby simplifying the search for solutions of the Diophantine equation. However, given an admissible solution, we do not know the $S$-matrix unless the MLDE is of hypergeometric type.  Thus the \textit{modular} data for a given admissible solution is incomplete. In particular, we cannot determine the fusion matrix using the Verlinde formula to decide whether the admissible solution is a potential CFT.

There exist indirect methods that let us determine the $S$-matrix. We list a few of the methods that have appeared in this context. The generalized coset construction of Gaberdiel-Hampapura-Mukhi lets one construct new solutions as duals to known solutions\cite{Gaberdiel:2016zke}. A second method is the use of Hecke operators to obtain new solutions from known ones\cite{Harvey:2018rdc,Duan:2022ltz}.  A third method is to use theory of vector valued modular forms to construct new solutions from old ones\cite{Bantay:2005vk,Gannon:2013jua,Rayhaun:2023pgc,Govindarajan:2025rgh}. A fourth method uses an involutive duality of Bantay-Gannon to relate two different theories\cite{Govindarajan:2025jlq}. All these methods not only enlarge the families of admissible solutions but also provide the $S$-matrix of the new admissible solutions.
This paper provides a method of computing the $S$-matrix that lies within the framework of the MLDE approach. This removes one of the issues with the holomorphic modular bootstrap.

Our approach to determining the S-matrix is to consider the MLDE in the variable $z=\frac{J(\tau)}{1728}$ space. Then, $z=0,1$ and $\infty$ are regular singular points of the MLDE with the monodromy around $z=0$ being the $U$-matrix, $z=1$ being the $S$-matrix and $z=\infty$ being the $T$-matrix.   The Frobenius power series around each singularity only determines the conjugacy class of the monodromy. We switch variables to $w=\frac1z$ so that the $T$-matrix is the monodromy about $w=0$ and $S$ is then the monodromy about $w=1$. We match the two local solutions about $w=0$ and $w=1$ at $w=w_0=\frac12+\epsilon$ with $0< |\epsilon| <\frac12$. (see  \Cref{contours}) The two sets of solutions are convergent in this domain. The change of basis that relates the two solutions is called the \textit{connection matrix} which we denote by the $n\times n$ matrix $C$. The connection matrices are explicitly known only for the generalized hypergeometric functions from the work of Beukers and Heckman\cite{Beukers1989}. Our methods are applicable for the non-hypergeometric MLDEs as well.

The matching is done numerically leading to a numerical estimate for the connection matrix $C$ from which we are able to obtain a numerical estimate for the $S$-matrix. The numerical estimate for the $S$-matrix is then converted to an \textit{exact} one by using the fact that the entries for the normalized $S$-matrix, $\bar{S}=S/S_{00}$ lie in the ring of integers, $\mathbb{Z}[\zeta_{\bar N}]\in \mathbb{Q}(\zeta_{\bar N})$, where $\mathbb{Q}(\zeta_{\bar N})$ is the cyclotomic extension of the rationals by a primitive $\bar N$-th root of unity. Here ${\bar N}$ the order of $\bar{T}=T/T_{00}$.  It is easy to see that ${\bar N}$ is the LCM of the denominators of the conformal weights.

\begin{figure}[ht]
\centering
\begin{tikzpicture}[scale=1.8,
  arrow/.style={-Stealth,thick},
  every node/.style={font=\small}
]

\draw (-3,0)--(3,0);
\draw (0,-2)--(0,2);

\fill (-1.0,0) circle (1.2pt) node[below]{$0$};
\fill ( 1.0,0) circle (1.2pt) node[below]{$1$};
\fill (0,0) circle (1.4pt) node[below right]{$w_0$};

\draw[thick] (-1.0,0) circle (1.0);
\draw[thick] ( 1,0) circle (1.0);

\draw[thick] (0,0) ellipse (2.3 and 1.7);

\draw[arrow] (-1.1,-1)--(-1,-1);
\node at (-1.1,-1.15){$T=M_0$};

\draw[arrow] (1,-1)--(1.1,-1);
\node at (0.88,-1.15){$S=C^{-1}\cdot M_1 \cdot C$};

\draw[arrow] (-0.2,-1.7)--(0.2,-1.7);
\node at (0.2,-1.9){$M_\infty$};


\end{tikzpicture}
\caption{The contours that determine the monodromy data of the MLDE}\label{contours}
\end{figure}
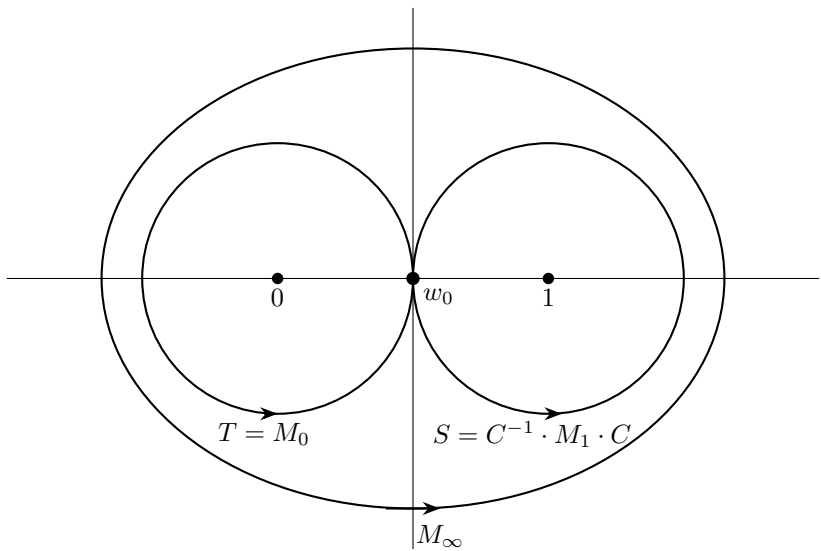

\section{Background}

\subsection{Rational Conformal Field Theories}
The modular invariant torus partition function of an RCFT, with $n$ characters, has the form
\[
Z(\tau,\bar\tau) = \sum_{i,j=0}^{n-1} \bar{\chi}_i(\bar\tau)\ m_{ij}\ \chi_j(\tau)\ ,
\]
where $M=(m_{ij})$ is the multiplicity matrix. The characters have the following 
$q$-series
\begin{equation} \label{exponents}
\chi_i(\tau) = q^{\alpha_i}\ \left(\sum_{m=0}^\infty a_{i,n}\ q^m\right)\ , 
\end{equation}
where $q=\exp(2\pi i\tau)$,  $\alpha_i = -\frac{c}{24}+h_i$ and $a_{0,0}=1$. Further all the $a_{i,n}$ are non-negative integers.

It is useful to form the vector valued modular form $\mathbb{X}(\tau)=(\chi_0,\chi_1,\ldots,\chi_{n-1})^T$ from the characters of the RCFT. The $S$ and $T$ matrices are defined by how they transform under modular transformations:
\begin{align*}
\mathbb{X}(\tau+1) & =  T \cdot \mathbb{X}(\tau)\ ,\\
\mathbb{X}(-\tfrac1\tau) &= S \cdot \mathbb{X}(\tau)\ .
\end{align*}
Modular invariance of the torus partition function requires (i) $c=\bar{c}$, (ii) $h_{i}-\bar{h}_j=0\mod \mathbb{Z}$, when $m_{ij}\neq0$ and $S^\dagger\cdot M \cdot S=M$. For unitary RCFTs, the index $i=0$ is associated with the identity operator which has $h=0$.

Our focus will be mostly on diagonal modular invariants, where $M$ is a diagonal matrix  with entries $(m_0=1,m_1,\ldots,m_{n-1})$ that we will call \textit{multiplicities}. The torus partition function in this case is then $Z=\sum_{i}m_i\, |\chi_i(\tau)|^2$. The number of primaries is given by $\sum_{i=0}^{n-1}m_i$ and can exceed the number of characters.

Let $\phi_i$ represent the $i$-th primary field. The fusion algebra are then given by 
$$\phi_i \times \phi_j =\sum_{k} {N_{ij}}^k\phi_k\ .$$
The  structure constants of the fusional algebra are defined through the Verlinde formula\cite{Verlinde:1988sn}. One has
\[
{N_{ij}}^k =\sum_{a} \frac{S_{ia}S_{ja}S^*_{ak}}{S_{00}}\in \mathbb{Z}_{\geq0}\ .
\]

\subsection{Characters are solutions of an MLDE}

The $n$-characters of an RCFT are solutions to  an $n$-th order modular differential linear equation (MLDE). In the holomorphic modular bootstrap, one begins with an $n$-th order MLDE with fixed Wronskian index $\ell$ and solves a diophantine equation that finds the values of the parameters in the MLDE that lead to solutions with non-negative $q$-series\footnote{The Wronskian index $\ell =\frac{n(n-1)}{2}-6\sum_{i=0}^{n-1}\alpha_i$, where $\alpha_i$ are the exponents given in \Cref{exponents}.}. Such solutions are called \textit{admissible}. Let us denote the solutions by $\widetilde{\mathbb{X}} = (\tilde\chi_0,\tilde\chi_1,\ldots,\tilde\chi_{n-1})^T$. However, there are a couple of issues with the admissible solution even if it is associated with an RCFT.
\begin{enumerate}
\item We do not know the correct normalization for the admissible solutions. Except for the identity character, in all other cases, one has for $i>1$ 
$\chi_i(\tau) = y_i\ \tilde\chi_i(\tau)$, where the $y_i\in \mathbb{Z}_{>0}$ are what we call the \textit{degeneracies}. These are not fixed by the MLDE.
\item We do not immediately know the $S$-matrix of the solutions.
\item We do not know if the RCFT is unitary or not.
\end{enumerate}
In this paper, we will obtain the $S$-matrix within the MLDE approach and this can, on occasion, fix the degeneracies as well.

\subsection{Galois symmetry in RCFT}

The appearance of Galois symmetry in RCFT was first observed by de Boer and Goeree\cite{DeBoer:1990em}.
In an appendix to their paper, they
study the number field, $L$, generated by  the ratios $\lambda_{i}^{(m)} := \frac{S_{im}}{S_{0m}}$ of an RCFT. For fixed $i$, the $\lambda_{i}^{(m)}$ are the eigenvalues of the fusion matrix $N_{i}$ -- this is the fusion coefficient ${N_{ij}}^k$ with  $(j,k)$ considered as indices of the fusion matrix. 
Thus $L$ is a normal extension of $\mathbb{Q}$ and further they prove that the Galois group $\mathrm{Gal}(L/\mathbb{Q})$ is abelian. It follows from the theorem of Kronecker and Weber that $L$ is contained in a cyclotomic extension of $\mathbb{Q}$. However, de Boer and Goeree do not specify the extension. This was done by Ng and Schauenburg\cite{Ng:2007}. One defines the normalized $S$ and $T$ matrices as follows:
\begin{equation}
\bar{T} = \frac{T}{T_{00}}\quad,\quad \bar{S}=\frac{S}{S_{00}}\ .
\end{equation}
Then, the entries of the normalized $S$-matrix, $\bar{S}$ lie in the integer ring $\mathbb{Z}[\zeta_{\bar{N}}]\subset\mathbb{Q}(\zeta_{\bar{N}})$, where $\bar{N}$ is the order of $\bar{T}$, the normalized $T$-matrix\cite{Ng:2023}. \Cref{Kidambi} by Kidambi provides a compact proof of this result.

Another result due to  Coste and Gannon shows that 
all the entries in the $S$-matrix of an RCFT are contained in the field $\mathbb{Q}(\zeta_N)$ where $N$ is the order of the $T$ matrix\cite{Coste:1993af}. Further, the action on the Galois symmetry  on the chiral primaries explains why some entries of the $S$-matrices are equal upto a sign\cite{Gannon:2003de}.

\subsection{Non-unitary RCFTs}

A key feature of non-unitary CFT's is that the lowest conformal dimension is not associated with the identity operator but with another operator that we denote by
$\mathit{o}$. Gannon argues that the degeneracy as well as multiplicity of the character $\chi_{\mathit{o}}$ are equal to 1\cite{Gannon:2003de}. One has
\[
\chi_{\mathit{o}} = q^{-\alpha_{\mathit{o}}} \big( 1+ O(q))\ ,
\]
where $\alpha_{\mathit{o}} = -\frac{c}{24}-h_{\mathit{o}}:=-\frac{c_\text{eff}}{24}$. In the MLDE approach, one na\"ively assigns the smallest exponent to the identity operator when it should be the operator $\mathit{o}$. This is called the unitarized version of the non-unitary operation.

In our determination of the exact $S$-matrix, this can potentially cause problems. There are two ways when this can be fixed or doesn't affect our computation. The first one, is to compute the fusion coefficients to determine which operator leads to non-negative coefficients and has a fusion matrix compatible with being the identity operator. The second one happens occasionally. Suppose there is a Galois symmetry that relates the identity operator to $\mathit{o}$ -- this is called the Galois shuffle by Gannon\cite{Gannon:2003de}. In such cases, $S_{00}=\pm S_{\mathit{o}\mathit{o}}$. Then, treating the operator $\mathit{o}$ as the identity will not mess up or exact determination of the $S$-matrix. In particular, this implies that
$\bar{S}_{ij} = \pm \frac{S_{ij}}{S_{\mathit{o}\mathit{o}}}:=\pm \bar{S}'_{ij}$. Thus, $\bar{S}'_{ij}\in \mathbb{Z}[\zeta_{\bar N}]$.

\section{Determining the S-matrix}

\subsection{The strategy}

The strategy that we employ is as follows:
\begin{enumerate}
\item Let $\mathbb{B}_0(w)$ and $\mathbb{B}_1(u)$ denote a basis of local solutions about $w=0$ and $u=(1-w)=0$ obtained by the Frobenius power series method. In these bases, the monodromy is diagonal. Let $M_0$ and $M_1$ represent the local monodromy matrices. Then, one has
\[
\mathbb{B}_0(w\, e^{2\pi i}) = M_0 \cdot \mathbb{B}_0(w)\quad,\quad \mathbb{B}_1(u\, e^{2\pi i}) = M_1 \cdot \mathbb{B}_1(u) \ .
\]
\item It is easy to see that $T=M_0$ as we usually work in the basis where the $T$ matrix is diagonal. However, $M_1$ is not the $S$ matrix. One has
\begin{equation}\label{SmatrixfromC}
S = C^{-1}\cdot M_1 \cdot C\ ,
\end{equation}
where $C$ is the \textit{connection} matrix that relates the two bases, $\mathbb{B}_0$ and $\mathbb{B}_1$. 
\item In particular, one has
\begin{equation}\label{Cdef}
\mathbb{B}_1(1-w_0) = C \cdot \mathbb{B}_0(w_0)\ ,
\end{equation}
where the matching has to be done at a regular point where the two Frobenius series both converge. In other words, we need $0<|w_0|<1$. We choose $w_0=\frac12+\epsilon$. By matching coefficients up to and including order $\epsilon^{n-1}$, we obtain $n^2$ linear equations that we use to numerically determine the entries in the connection matrix.
\item Once the connection matrix is obtained, we get the numerical estimate of the 
$S$-matrix.
\item We then convert the numerical estimate to an exact one as we explain next.
\end{enumerate}

\subsection{Obtaining exact expressions for the $S$-matrix}

Recall that we can determine $\bar N$ for the admissible solution. Next, we need a basis for $\mathbb{Q}(\zeta_{\bar{N}})$. 
The dimension of this given by $\varphi(N)$, where $\varphi$ is the Euler totient function. An integral basis for the ring of integers is given by $1,\zeta_N,\zeta_N^2,\ldots, \zeta_N^{\varphi(N)-1}$. Thus one has
\[
\bar{S}_{ij}=\sum_{m=0}^{\varphi(N)-1} a_{ij}(m)\ \zeta_N^m\ ,
\]
$a_{ij}(m)\in \mathbb{Z}$. This is a very useful expression as it lets us use our numerical estimate for $\bar{S}_{ij}$ to determine integers $a_{ij}(m)$  provide the closest to our numerical estimate to $\bar{S}_{ij}$. If it happens that $\bar{S}_{ij}$ is real, the number of terms reduce as we will see.

\subsection{A working example}

The first example of a non-hypergeometric MLDE occurs for three-character theories with Wronskian index $\ell=3$ i.e., $(n,\ell)=(3,3)$. The $S$-matrix for these theories were obtained using an involutive duality that related these theories to $(3,0)$ theories whose characters can be written in terms of the $_3F_2$ hypergeometric functions and their $S$-matrices are known. We will use one of those examples to re-derive the $S$-matrix using the method described earlier.

The $(3,3)$ MLDE is given as follows:
\begin{equation*}
\label{eq:33MDE}
    \left[ \mathcal{D}^3   +\frac12 \frac{E_4^2 }{E_6}\, \mathcal{D}^2 + \mu E_4\, \mathcal{D} + \nu E_6+\rho \frac{E_4^3 }{E_6}\right]\  f(\tau)= \, 0 \ ,
\end{equation*}
where 
\[
            \mu=e_2(\alpha) +\frac{1}{36} \quad,\quad  \nu=-e_3(\alpha)   \ ,    
\]
are fixed by the exponents. The accessory parameter, $\rho$, is left unfixed and has to be determined by the condition that the solutions be admissible.
The MLDE in the coordinates $w,u=(1-w)$ are as follows:
\begin{equation}
 \begin{aligned}
 \left[\partial_w^3+\left(\tfrac{3}{w}+\tfrac{1}{w-1}\right)\partial^2_w+\left(\tfrac{35/36+\mu}{w^2}+\tfrac{3/2-\mu}{w(w-1)}-\tfrac{1/4}{(w-1)^2}\right)\partial_w+\tfrac{\nu+\bar{\rho} w}{w^3(w-1)^2}\right] f(w)&=0~,\\
\left[\partial_u^3+\left(\tfrac{3}{u-1}+\tfrac{1}{u}\right)\partial^2_u+\left(\tfrac{35/36+\mu}{(u-1)^2}+\tfrac{3/2-\mu}{u(u-1)}-\tfrac{1/4}{u^2}\right)\partial_u+\tfrac{\nu+\bar\rho-\bar\rho u}{u^2(u-1)^3}\right]\ f(u)&=0~,
\end{aligned}
\end{equation}
where $\bar\rho =\frac{\rho}{1728}$.

Our working example has $c=\frac{48}7$, $h_1=\frac17$ and $h_2=\frac57$. From this we obtain the exponents
\[
\alpha_0=-\tfrac27\quad,\quad\alpha_1=-\tfrac17\quad,\quad \alpha_2=\tfrac37\ .
\]
From this we obtain $\mu=-(29/252)$, $\nu=-(6/343)$. The accessory parameter is $\bar\rho=(-989/24696)$. This fixes the MLDE of interest.
\begin{enumerate}
\item We first obtain the basis $\mathcal{B}_0$ which has the following $q$ expansion
\[
\mathcal{B}_0(w) =\begin{pmatrix}
q^{-2/7} \left(1 + O(q)\right) \\
q^{-1/7} \left(1 + O(q)\right) \\
q^{3/7} \left(55 + O(q)\right)
\end{pmatrix} = \begin{pmatrix}
(w/1728)^{-2/7} \left(1 + O(w)\right) \\
(w/1728)^{-1/7} \left(1 + O(w)\right) \\
(w/1728)^{3/7} \left(55 + O(w)\right)
\end{pmatrix}
\]
It is important to get the normalization of the solutions before working out the Frobenius power series about $w=0$. The above expansions fixes that ambiguity. 
Truncate the expansions at $q^{N_\text{max}+\alpha_i}$ for some large value of $N_\text{max}$. 
 \item The indicial equation about $u=0$ gives the indices $(0,\frac12,\frac32)$. Thus $M_1=\text{Diag}(1,-1,-1)$. Since the indices differ by an integer, we need to ensure that we get two linearly independent solutions. 
\[
\mathcal{B}_1(u) = \begin{pmatrix}
u^0\ \left( 1 + O(u)\right)\\
u^{1/2}\ \left( 1 + O(u)\right)\\
u^{3/2}\ \left( 1 + O(u)\right)
\end{pmatrix}
\]
\item Next we compute the connection matrix by considering the nine equations obtained from
\[
\mathbb{B}_1(\tfrac12 -\epsilon) - C \cdot \mathbb{B}_0(\tfrac12+\epsilon) = O(\epsilon^3)\ .
\]
We obtain
\[
C = \begin{pmatrix}
 0.0960231 & 0.0427724 & 0.0532802 \\
 1.42459 & -4.47652 & 1.02286 \\
 1.80508 & -6.30206 & 1.80122 \\
\end{pmatrix}
\]
\item Computing $S=C^{-1}\cdot M_1 \cdot C$, we obtain
\[
S=\begin{pmatrix}
0.327882 & 0.59149 & 0.7368 \\
 0.590957 & -0.736765 & 0.327904 \\
 0.7369 & 0.328244 & -0.591117 
 \end{pmatrix}\ ,
\]
which is the desired object. 
\item Since $\text{order}(\bar{T})=7$, we  expect the entries of $\bar{S}$ lie in the ring of integers $\mathbb{Z}(\zeta_7)$. 
Computing $\bar{S}$, we find
\[
\bar{S}= \begin{pmatrix} 
1. & 1.80397 & 2.24715 \\
 1.80235 & -2.24704 & 1.00007 \\
 2.24746 & 1.0011 & -1.80284 
 \end{pmatrix}  \ .
 \]
Since all the entries are real, a basis for them is given by
$(2\cos(2\pi x/7)$ for $x=1,2,3$. Writing the entries
as $\sum_{x=0}^2 a_x (\zeta_N^x+\zeta_N^{-x})$, we determine the integers $a_x$ by minimizing the following functions over integers in the range $[-4,4]$:
\[
F_{ij}(a_1,a_2,a_3):= \left|\bar{S}_{ij} - \sum_{x=0}^2  a_x (\zeta_N^x+\zeta_N^{-x}))\right|\ .\quad i,j=1,2,3\ .
\]
\begin{align*}
 \bar{S}= \begin{pmatrix}
 1 & \gamma_1 & \gamma_2\\
\gamma_1 & -\gamma_2 & 1 \\
 \gamma_2 & 1 &-\gamma_1
 \end{pmatrix} \ ,
\end{align*}
where $\gamma_1 =1+(\zeta_7 +\zeta_7^{-1})+(\zeta_7^2 +\zeta_7^{-2})$ and  $\gamma_2 = 1+(\zeta_7 +\zeta_7^{-1})$. This agrees with the $S$-matrix obtained using involutive duality\cite{Govindarajan:2025jlq}. We estimate the numerical accuracy of the connection matrix by defining the error matrix\footnote{We thank Emre Sert\"oz for emphasising the need to estimate errors.}
\begin{equation}
\text{ErrorM} = C\cdot S_{\text{exact}} \cdot C^{-1} - M_1\ ,
\end{equation}
and computing its Frobenius norm. The error goes down as we increase the number of terms in the Frobenius power series. In the figure, we show how the error changes with $N_\text{max}$.
\begin{figure}
\centering
\includegraphics[scale=0.7]{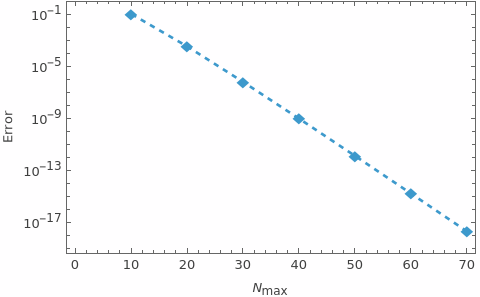}
\caption{We show how the error changes as we increase the number of terms in the Frobenius power series}
\end{figure}
\end{enumerate}

\subsubsection{Another $(3,3)$ example}

In computing the $S$-matrices for $(3,3)$ admissible solutions, two of the classes have $S_{00}=0$. We cannot define the normalized $S$-matrix in these cases. We need to use either the Coste-Gannon result that entries belong to $\mathbb{Q}(\zeta_N)$. 

We consider the theory with $c=5$ and $h_1=\frac1{16}$ and $h_{2}=\frac9{16}$.
We obtain the numerical S-matrix
\[
S_\text{num}=\begin{pmatrix}
0.\times 10^{-12} & 0.7071067812 &
   0.70710678119 \\
 0.70710678119 & -0.50000000000 & 0.50000000000 \\
 0.70710678119 & 0.5000000000 & -0.50000000000 
\end{pmatrix}
\]
The Coste-Gannon result implies that the element of the above matrix belong to $\mathbb{Q}(\zeta_{24})$. It turns out that it belongs to $\mathbb{Q}(\zeta_{8})$.
\[
S=\begin{pmatrix}
0 & \frac12 \sqrt{2} &
   \frac12 \sqrt{2} \\
 \frac12 \sqrt{2} & -\frac12 & \frac12 \\
\frac12 \sqrt{2} & \frac12 & -\frac12
\end{pmatrix}\ ,
\]
which agrees with the known answer\cite{Govindarajan:2025jlq}.
\begin{remark}
 It does not appear that there is a CFT associated with this $S$-matrix and hence the considerations of Coste-Gannon or de Boer-Goeree should not be applicable. We are na\"\i vely extending the validity of their results in this context. Surprisingly, it works rather well. Is there a more general derivation of Galois symmetry from the MLDE viewpoint without recourse to the Verlinde formula? 
\end{remark}

\subsection{Four-character examples}

In this sub-section, we consider two examples, one of $(4,0)$ type for which the S-matrices are known. The second example is of $(4,2)$ type that has not been studied so far and will be discussed in a forthcoming paper by two of the authors. 

The MLDE for $(4,\ell)$ theories is given by
\[
\left[ \mathcal{D}^4+ \mu_1 \frac{E_6}{E_4} \mathcal{D}^3 + \kappa E_4 \mathcal{D}^2  + \mu E_6 \mathcal{D} +(\nu E_4^2+\rho \frac{\Delta}{E_4})\right]  f(\tau)\,= \, 0\ ,
\]
where
\begin{align*}
&\mu_1(\ell)=\frac{\ell}{6}\ ,\ \kappa(\ell)=e_2(\alpha)-\frac{11-3\ell}{36} \ ,\ 
\\
&\mu(\ell)=-e_3(\alpha)+\frac{1}{6}e_2(\alpha)-\frac{5-\ell}{216}\ ,\ \nu=e_4(\alpha),
 \end{align*}
 For $\ell=0$, the accessory parameter $\rho$ vanishes while it has to be fixed on a case by case basis for $\ell=2$.

\subsection*{A $(4,0)$ example}

The theory has $c_\text{eff}=\frac65$, (shifted) weights $(0,\frac15,\frac25,\frac35)$. It is obtained as a tensor product $LY_1^{\otimes3}$ and has four characters and eight primaries.  This is a non-unitary theory but we are giving the unitarized version of the central charge and weights. The identity operator is assigned the weight $\frac35$. Our calculations use the unitarized version. The lowest dimension operator is related to the identity operator by Galois symmetry and so $\bar{S}\in\mathbb{Z}[\zeta_5]$.

The $q$-series has the following leading expansion:
\[
\mathcal{B}_0(w) =\begin{pmatrix}
q^{-1/20} \left(1 + O(q)\right) \\
q^{3/20} \left(1 + O(q)\right) \\
q^{7/20} \left(1 + O(q)\right)\\
q^{11/20} \left(1 + O(q)\right)
\end{pmatrix}
\]
Repeating the procedure outlined, we obtain the normalized $S$-matrix from which we obtain the exact S-matrix.
\[
\bar{S}=\begin{pmatrix}
1.00000 & 1.85410 & 1.14590 & 0.236068 \\
 0.618034 & -0.236068 & -1.00000 & -0.381966 \\
 0.381966 & -1.0000 & 0.236068 & 0.618034 \\
 0.236068 & -1.14590 & 1.85410 & -1.00000 
\end{pmatrix} =
\begin{pmatrix}
1 & 3 \gamma_1 & 3\gamma_2 & \gamma_3 \\
\gamma_1 & -\gamma_3 & -1 & -\gamma_2 \\
\gamma_2 & -1 & \gamma_3 & \gamma_1 \\
\gamma_3 & 3 \gamma_2 &   3\gamma_1 & -1 
\end{pmatrix}
\]
where $\gamma_1=\zeta_5 + \zeta_5^{-1}$, $\gamma_2=1-\gamma_1$, $\gamma_3=-1+2\gamma_1$. This is consistent with $\bar{N}=5$. It is easy to see that the multiplicities of the characters are $(1,3,3,1)$ -- thus this is an eight primary CFT.

\subsection*{A $(4,2)$ example}

The admissible has $c=\frac83$, weights $(0,\frac19,\frac13,\frac23)$. The $q$-series has the following leading expansion:
\[
\mathcal{B}_0(w) =\begin{pmatrix}
q^{-1/9} \left(1 + 3 q + O(q^2)\right) \\
q^{0} \left(1 + O(q)\right) \\
q^{2/9} \left(5 + O(q)\right)\\
q^{5/9} \left(7 + O(q)\right)
\end{pmatrix}
\]
We give an extra term in the identity character as there is one unfixed accessory parameter. Here $\rho=-280/81$. The degeneracies for the last two characters are $5$ and $7$.

The normalized S-matrix that we obtain is as follows:
\[
\bar{S} = \begin{pmatrix}
1.00000 & 8.63816 & 4.41147 & 5.41147 \\
 2.87939 & 2.6\times 10^{-12} & 2.87939 & -2.87939 \\
 4.41147 & 8.63816 & -5.41147 & -1.00000 \\
 5.41147 & -8.63816 & -1.00000 & 4.41147 \\
\end{pmatrix}=\begin{pmatrix}
1 & 3 \gamma_1 & \gamma_2  & 1+ \gamma_2\\
\gamma_1 & 0 & \gamma_1 & -\gamma_1 \\
\gamma_2 & 3\gamma_1 & -1-\gamma_2 & -1\\
1+\gamma_2 & -3 \gamma_1 & -1 & \gamma_2
\end{pmatrix}\ ,
\]
where $\gamma_1 = 1+ (\zeta_9 +\zeta_9^{-1})+ (\zeta_9^2 +\zeta_9^{-2})$, $\gamma_2=\gamma_1+ (\zeta_9 +\zeta_9^{-1})$. It is easy to see that the multiplicities of the characters are $(1,3,1,1)$ -- thus this is an six primary CFT. The resolved S-matrix appears in the list of  six-primary MTCs\cite{Ng:2023}.

\subsection{Five-character examples}

We consider the determination of the S-matrix for admissible solutions for five-character MLDEs. 

\subsection*{A pair of five character examples}

We consider a pair of five character unitary theories, one of $(5,0)$ type and the other of type $(5,2)$ that are GHM duals of each other. The MLDE for $(5,\ell)$ $(\ell=0,2)$  theories here is given by
\[\left[ \mathcal{D}^5 + \mu_1\frac{E_6}{E_4}\mathcal{D}^4+\kappa E_4 \mathcal{D}^3+ \mu_3 E_6 \mathcal{D}^2  + \left(\mu_4 E_4^2 +\rho \frac{\Delta}{E_4}\right)\mathcal{D} +\nu E_4E_6\right]  f(\tau)\,= \, 0~.\]
where
\begin{align*}
    \mu_1(\ell)&=\frac{\ell}{6},\qquad   \kappa(\ell)=e_2(\alpha)-\frac{35-6\ell}{36},\qquad
\mu_3(\ell)=-e_3(\alpha)+\frac{1}{2}e_2(\alpha)-\frac{55-7\ell}{216},\\
 \mu_4(\ell)&=e_4(\alpha)-\frac{1}{6}e_3(\alpha)+\frac{1}{36}e_2(\alpha)-\frac{9-\ell}{1296}, \qquad \nu=-e_5(\alpha).
\end{align*}
Further, the accessory parameter, $\rho$, vanishes for $\ell=0$. 

The two theories under consideration have the following properties.
\begin{enumerate}
        \item The Hecke operator $T_{31}$ acting on $LY_4$-RCFT gives rise to a $(5,0)$ theory with $c=\frac{248}{11}$ and $h=(0,\frac{20}{11},\frac{16}{11},\frac{21}{11},\frac{13}{11})$, and the  $q$-series has the following leading expansion: 
\[
\mathcal{B}_0^{(1)}(w) =
\begin{pmatrix}
q^{-31/33}\left(1 +O(q)\right) \\
\mathbf{2}\,q^{29/33}\left(15314 + O(q)\right) \\
q^{17/33}\left(4123 + O(q)\right) \\
\mathbf{31}\,q^{32/33}\left(1976 + O(q)\right) \\
\mathbf{31}\,q^{8/33}\left(8 + O(q)\right)
\end{pmatrix}.
\]
corresponds to the characters of RCFT associated to the Thompson group. The integrality constraints in MLDE approach do not fix the degeneracies that are indicated above in bold face. However, the Hecke operator method fixes it to the values given above. The $S$-matrix is in the $LY_4$-class.
 \item The Hecke $T_2$ action on $LY_4$ gives $(5,2)$ theory, which is $c_{\mathcal{H}}=24$ GHM dual of the above theory. The theory has $c=\frac{16}{11}$ and $h=(0,\frac2{11},\frac{6}{11},\frac{1}{11},\frac{9}{11})$, and corresponds to the 3C conjugacy class of the Monster group. The $q$-series has the following leading expansion: 
\[
\mathcal{B}_0^{(2)}(w) =
\begin{pmatrix}
q^{-2/33}\left(1 + 0\,q + O(q^2)\right) \\
q^{4/33}\left(1 + O(q)\right) \\
q^{16/33}\left(3 + O(q)\right) \\
q^{1/33}\left(2 + O(q)\right) \\
q^{25/33}\left(2 + O(q)\right)
\end{pmatrix}.
\]
The accessory parameter is found to be $\rho = -\frac{1480}{121}.$ This can be determined in two ways: first, the Hecke operator gives the coefficient of $q$ in the identity character, zero in this case. The second way is via the integrality constraint in the MLDE approach. The $S$-matrix is in the $LY_4$-class. It is easy to check that 
$$(\mathcal{B}_0^{(1)})^T\cdot \mathcal{B}_0^{(2)} = J(\tau)-744\ .
$$
This shows that the two theories are GHM duals\cite{Gaberdiel:2016zke}.
\end{enumerate}

As expected our numerical computation for these two theories give the same $S$-matrix. The normalized $S$-matrix that we obtain from one of these two MLDEs is
\begin{align*}
   \bar{S} &=
\begin{pmatrix}
 1.00000 & 2.68251 & 3.51334 & 3.22871 & 1.91899 \\
 2.68251 & 1.91899 & -3.22871 & -1.00000 & 3.51334 \\
 3.51334 & -3.22871 & 2.68251 & -1.91899 & 1.00000 \\
 3.22871 & -1.00000 & -1.91899 & 3.51334 & -2.68251 \\
 1.91899 & 3.51334 & 1.00000 & -2.68251 & -3.22871 \\
\end{pmatrix}\\
&=
\begin{pmatrix}
1 & \gamma_1 & \gamma_2 &  \gamma_3 & \gamma_4 \\
\gamma_1 & \gamma_4 & -\gamma_3 & -1 & \gamma_2 \\
\gamma_2 & -\gamma_3 & \gamma_1  &  -\gamma_4  & 1 \\
\gamma_3 & -1 &  -\gamma_4 & \gamma_2 & -\gamma_1\\
\gamma_4 & \gamma_2 & 1 & -\gamma_1 &-\gamma_3
\end{pmatrix}, 
\end{align*}
where  $\gamma_1 = 1 + (\zeta_{11} + \zeta_{11}^{-1})$,
$\gamma_2 = \gamma_1+(\zeta_{11}^2 + \zeta_{11}^{-2})$,
$\gamma_3 = \gamma_2+(\zeta_{11}^3 + \zeta_{11}^{-3})$ and
$\gamma_4 = \gamma_3+(\zeta_{11}^4 + \zeta_{11}^{-4})$. This is consistent with the order $\bar{T}$ being $11$ in both cases.

\section{Conclusion}

The main focus of this paper has been to compute exact S-matrices within the setting of the holomorphic modular bootstrap. This has been illustrated with a selected set of examples arising as admissible solutions of MLDE. A larger set of examples will be provided as an ancillary Mathematica file with the arXiv submission of this paper.

The exact $S$-matrices were obtained by appealing to integrality results from Galois symmetry in RCFTs. However, the set of admissible solutions is larger than the set of RCFTs. Do such integrality results hold for these solutions as well? We have implicitly assumed that it must be true for all admissible solutions. It is of interest to prove the Galois symmetry using the theory of differential Galois theory to the case of MLDEs\cite{Beukers:1995,Magid:1994,Pillay:1998,Put:2012}. Is there a version of the Verlinde formula in this context?

The $F$ and $R$ symbols that extend the modular data of an RCFT are additional objects that need to be computed. In the context of MTC's, these have been completed computed for those MTC's that are associated with four or fewer primaries (simple objects)\cite{Rowell:2009}. In the context of RCFT's, the $F$-symbols correspond to a change of basis for conformal blocks associated with four-point functions on the sphere\cite{Cheng:2020srs}. 
The methods used in this paper might be of some use in this context. 
The R-symbols are cyclotomic integers in $\mathbb{Q}(\zeta_{2\bar N})$ under a suitable gauge choice. However, the F-symbols remain unknown but the expectation is that they are in 
$\mathbb{Q}(\zeta_{2\bar N})$\cite[see Theorem 3.1]{Ng:2024}.\footnote{We thank Prof. Ng for an email correspondence explaining this result.} Is there an integrality statement in these cases?


\subsection*{Acknowledgments} We thank Emre Sert\"oz and Abhiram Kidambi for useful discussions. Abhiram helped us better understand the integrality result for entries of the normalized $S$-matrices. We have included his explanation as an appendix so that others can benefit as we did. We also thank Richard Ng for useful email correspondences regarding his work. This work arose from  one of us (SG) attending the ICTP/ Winter School on Number Theory and Physics(smr 4162) held during November 2025. We thank the organisers for the kind invitation. AS was also supported by a CSIR Senior Research Fellowship.

\appendix

\section{Some definitions}

The $n$-the root of unity is $\zeta_n:=\exp(2\pi i/n)$.
We use the notation $e_n(\alpha)$ to represent symmetric polynomials of the exponents $(\alpha_0,\ldots,\alpha_{n-1})$. Thus, one has
\begin{align*}
e_1(\alpha) &= (\alpha_0+\cdots +\alpha_{n-1})\\
e_2(\alpha) &= (\alpha_0 \alpha_1 + \text{permutations})\\
e_3(\alpha) &= (\alpha_0\alpha_1 \alpha_2 + \text{permutations})
\end{align*}

\subsection{Modular background}

The formulae for the Eisenstein series ($E_2$, $E_4,$ and $E_6$), the cusp form of weight $12$ ($\Delta$), and Klein-$J$ invariant are given. Let  $q=\exp(2\pi i \tau)$. 
\begin{align*}
E_2(\tau) &= 1 - 24\sum_{n=1}^{\infty} \frac{n\, q^n}{(1-q^n)} = 1-24q-72q^2-96q^3  +\cdots \\
E_4(\tau) &= 1 + 240\sum_{n=1}^{\infty} \frac{n^3\, q^n}{(1-q^n)} = 1 + 240 q + 2160 q^2 + 6720 q^3  + \cdots \\
E_6(\tau) &= 1 - 504\sum_{n=1}^{\infty} \frac{n^5\, q^n}{(1-q^n)} = 1 - 504 q - 16632 q^2 - 122976 q^3 +\cdots \\
\Delta(\tau) &= \eta^{24} =  \frac{E_4^3 - E_6^2}{1728} = q - 24 q^2 + 252 q^3 - 1472 q^4 + 4830 q^5 + \cdots \\
J(\tau) &= \frac{E_4^3}{\Delta} = \frac{1}{q} + 744 + 196884 q + 21493760 q^2 +864299970 q^3+\cdots
\end{align*}
 The covariant derivative, acting on modular forms of weight $w$ is,  
\[
\mathcal{D}_w := q\frac{d}{dq} - \frac{w}{12} E_2.
\]
and the higher order derivatives are defined as
\[
\mathcal{D}_w^n :=  \mathcal{D}_{w+2n-2}\circ \mathcal{D}_{w+2n-4}\circ\ldots \circ\mathcal{D}_w.
\]

\subsection{The LY minimal models}

Theories in the $LY_1$ class (defined by Duan et al.\cite{Duan:2022ltz}) arise from the non-unitary minimal model, $M(5,2)$ which is  associated with Lee-Yang edge singularity. It has
\[
c=-\frac{22}{5}\quad,\quad h=(0.-\frac{1}{5})\ .
\]
Defining $c_\text{eff} := (c -24 h_\text{min})$ and $h_\text{eff}:=(h-h_\text{min})$, we obtain a `unitarized' version given by
\[
c_\text{eff}= \frac{2}{5}\quad,\quad h_\text{eff}=\left(0,\frac{1}{5}\right)\ .
\]
Theories in the $LY_1$ class appear as $(2,4)$  in the list of MTCs.

Theories in the $LY_4$ class (defined by Duan et al.\cite{Duan:2022ltz}) arise from the non-unitary minimal model $M(11,2)$. It has
\[
c=-\frac{232}{11}\quad,\quad h=\left(0,-\frac{4}{11},-\frac{7}{11},-\frac{9}{11},-\frac{10}{11}\right)\ .
\]
The `unitarized' version is given by
\[
c_\text{eff}=\frac{8}{11}\quad,\quad h_\text{eff}=\left( 
0,\frac{1}{11},\frac{3}{11},\frac{6}{11},\frac{10}{11}  \right)\ .
\]
Duan et al. construct a family of five-character theories by the action of various Hecke operators on the characters of the non-unitary minimal model $M(11,2)$.


\section{Proof of cyclotomic integrality of S-matrix entries}\label{Kidambi}

\begin{center}
by \\[5pt]
Abhiram Kidambi\footnote{This proof was explained to the authors of the manuscript (SG, AJ, AS) by Abhiram Kidambi.  While AK felt a brief acknowledgment would have sufficed, he appreciates the authors' thoughtfulness and scientific ethos on featuring his explanation as a standalone appendix. AK is funded by the ERC (MaScAmp, 101167287).}
\end{center}
%

This appendix contains a short proof of cyclotomic integrality of S-matrix entries.
\noindent Let $\mathcal{C}$ be a modular tensor category (MTC) with a finite set of isomorphism classes of simple objects $\{X_0, \dots, X_{n-1}\}$, where $X_0$ is the monoidal unit. The \emph{normalized} modular data $(\bar{S}, \bar{T})$ provides a projective representation of $SL(2, \mathbb{Z})$.  
\begin{remark}
\label{rmk1}
The matrices  $\bar S$ and $\bar{T}$ are normalized meaning that ${\bar S}_{00}={\bar T}_{00} = 1$. Additionally, ${\bar S}$ is symmetric. The ${\bar S}$-matrices in the main text have size equal to the number of characters while in the MTC setting, the rank is equal the number of primaries. 
\end{remark}
    
    \begin{proposition}
    \label{mainprop}
Let ${\bar N}$ (taken to be a finite positive integer) be the order of ${\bar T}$.
For any $i, j \in \mathbb{Z}\cap [0,n-1]$, $\bar{S}_{ij}\in \mathbb{Z}[\zeta_{\bar N}]$, where $\mathbb{Z}[\zeta_{\bar N}]$ is the integer ring of the  ${\bar N}$ dimensional cyclotomic extension over $\mathbb{Q}$ i.e, $\mathbb{Q}(\zeta_{\bar N})$. 
\end{proposition}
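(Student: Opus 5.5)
Two separate facts are needed. First, each $\bar S_{ij}$ is an algebraic integer. Second, each $\bar S_{ij}$ lies in $\mathbb{Q}(\zeta_{\bar N})$. Once both hold, $\bar S_{ij}$ lies in the ring of integers of $\mathbb{Q}(\zeta_{\bar N})$, which is $\mathbb{Z}[\zeta_{\bar N}]$.

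\emph{Integrality.} I will use the Verlinde formula in the MTC $\mathcal{C}$. For fixed $i$, the fusion matrix $N_i=(N_{ij}{}^k)_{j,k}$ has nonnegative integer entries. It is diagonalized by $S$, with eigenvalues $\lambda_i^{(m)}=S_{im}/S_{0m}$. Taking $m=0$ gives $\lambda_i^{(0)}=S_{i0}/S_{00}=\bar S_{i0}$. This is an eigenvalue of an integer matrix, hence a root of a monic integer polynomial, hence an algebraic integer. It equals $\bar S_{0i}$ because $\bar S$ is symmetric (Remark~\ref{rmk1}); these are the quantum dimensions $d_i$.

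For a general entry, I will use the balancing relation
\[
\bar S_{ij}=\theta_i^{-1}\theta_j^{-1}\sum_k N_{i^*j}{}^k\,\theta_k\,d_k,
\]
where $\theta_k=\bar T_{kk}$ are the twists. The twists are roots of unity, and the $d_k$ are algebraic integers. The right-hand side is therefore a $\mathbb{Z}$-combination of products of algebraic integers, so it is an algebraic integer. This same formula also shows $\bar S_{ij}\in\mathbb{Q}(\zeta_{\bar N},d_0,\dots,d_{n-1})$.

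\emph{Cyclotomicity: reduction.} By the balancing formula it suffices to show that each $d_k$, and more generally each ratio $S_{im}/S_{0m}$, lies in $\mathbb{Q}(\zeta_{\bar N})$. The de Boer--Goeree argument recalled in the background section gives that the field $L$ generated by these ratios is abelian over $\mathbb{Q}$. Each $\sigma\in\mathrm{Gal}(\bar{\mathbb{Q}}/\mathbb{Q})$ permutes the common eigenvectors of the fusion matrices, so it induces a permutation $\hat\sigma$ of the labels with $\sigma(S_{im}/S_{0m})=S_{i\hat\sigma(m)}/S_{0\hat\sigma(m)}$. Hence $L$ is Galois with abelian group.

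\emph{Cyclotomicity: the key step.} I must show that $\sigma$ acts trivially on $L$ whenever $\sigma$ fixes $\zeta_{\bar N}$. This is where I expect the main obstacle, since the conductor has to be pinned down exactly. I would follow Ng--Schauenburg in two stages.

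First, I show that the projective representation $\bar\rho$ generated by $\bar S,\bar T$ can be lifted to a linear representation $\rho$ of $SL(2,\mathbb{Z})$. The lift factors through $SL(2,\mathbb{Z}/N\mathbb{Z})$ with $\bar N\mid N$; this is the congruence property. The entries of $\rho$ then lie in $\mathbb{Q}(\zeta_N)$.

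Second, I use the Galois action $\sigma_a$ with $a\in(\mathbb{Z}/N)^\times$. It satisfies $\sigma_a(\rho(g))=G_a\rho(g)$, where $G_a=\rho(\mathfrak{t}^a\mathfrak{s}\mathfrak{t}^{a^{-1}}\mathfrak{s}\mathfrak{t}^a)$ is a signed permutation matrix.

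Taking ratios $\rho(\mathfrak{s})_{ij}/\rho(\mathfrak{s})_{00}=\bar S_{ij}$ removes the scalar ambiguity of the lift. Now suppose $\sigma_a$ fixes $\zeta_{\bar N}$, that is, $a\equiv1 \pmod{\bar N}$. Then $\rho(\mathfrak{t})^a=\rho(\mathfrak{t})$ up to a scalar, so $G_a$ is a scalar multiple of the identity. Consequently $\sigma_a$ fixes all the ratios $\bar S_{ij}$, so they lie in the fixed field $\mathbb{Q}(\zeta_{\bar N})$. Combining this with the integrality established above gives the proposition.
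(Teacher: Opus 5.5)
Your integrality half is fine. The paper gets it more economically from $\bar S_{ij}=\lambda_i^{(j)}\lambda_j^{(0)}$, a product of eigenvalues of the integer matrices $N_i$ and $N_j$, which uses only the Verlinde formula and the symmetry of $\bar S$ and needs no twists.

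The gap is in the cyclotomic half, at the sentence ``The entries of $\rho$ then lie in $\mathbb{Q}(\zeta_N)$.'' This does not follow from the congruence property. Congruence is a statement about $\ker\rho$. That kernel is unchanged if you conjugate $\rho$ by an invertible diagonal matrix $D$. Such a conjugation also leaves $\rho(\mathfrak{t})$ unchanged, but it multiplies $\rho(\mathfrak{s})_{ij}$ by $D_{ii}/D_{jj}$. So the kernel says nothing about the field generated by the entries.

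The $\mathbb{Q}(\zeta_N)$-rationality of a lift in the basis of simple objects is a separate theorem of Dong--Lin--Ng. There it is deduced by writing $\rho(\mathfrak{s})$ as a scalar multiple of $\bar S$ and invoking Ng--Schauenburg's result that $\bar S$ has entries in $\mathbb{Q}(\zeta_{\bar N})$, which is exactly the statement you are trying to prove. The Galois formula $\sigma_a(\rho(\mathfrak{s}))=G_a\rho(\mathfrak{s})$ has the same status, since it presupposes that $\sigma_a$ acts on those entries.

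So, as a derivation, your key step is circular. As a citation, it leans on results (Dong--Lin--Ng's rationality and Galois-symmetry theorems) that are themselves built on the Ng--Schauenburg theorem the paper cites directly. That theorem (Theorem 6.8 and Propositions 5.7--5.8) gives $\bar S,\bar T$ entries in $\mathbb{Q}(\zeta_{\bar N})$ outright, after which only algebraic integrality is needed. If you grant those inputs, your descent from $\mathbb{Q}(\zeta_N)$ to $\mathbb{Q}(\zeta_{\bar N})$ is correct. It also explains where $\bar N$ comes from: $\mathrm{diag}(a,a^{-1})\in\ker\bar\rho$ when $a\equiv 1\pmod{\bar N}$.

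Two formulas still need fixing:
\begin{itemize}
\item $G_a$ should be $\rho(\mathfrak{t}^a\mathfrak{s}\mathfrak{t}^{b}\mathfrak{s}\mathfrak{t}^a\mathfrak{s}^{-1})=\rho(\mathrm{diag}(a,a^{-1}))$ with $ab\equiv 1\pmod N$. Without the final $\mathfrak{s}^{-1}$, your $G_a$ is, for $a\equiv1\pmod{\bar N}$, a multiple of $\rho(\mathfrak{s})$ rather than of the identity, since $\mathfrak{t}\mathfrak{s}\mathfrak{t}\mathfrak{s}\mathfrak{t}=\mathfrak{s}$.
\item The relation $\sigma_a(\rho(g))=G_a\rho(g)$ holds only for $g=\mathfrak{s}$. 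For example, $\sigma_a(\rho(\mathfrak{t}))=\rho(\mathfrak{t})^a$.
\end{itemize}
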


We want to provide a proof of  \Cref{mainprop}.\footnote{Note that this statement appears in \cite[Proposition II.1]{Ng:2023}, and this appendix is aimed at providing a compact proof of this statement. } This proof given here is essentially a distillation of the relevant results found in \cite{Ng:2007}. We begin by building up with the relevant definitions. 

\begin{definition}[Fusion ring]
The fusion ring $K(\mathcal{C})$ is the $\mathbb{Z}$-algebra generated by simple objects with multiplication defined by
\[
X_i \otimes X_j \cong \bigoplus_{k=0}^{n-1} N_{ij}^k\, X_k,
\]
where $N_{ij}^k = \dim \mathrm{Hom}(X_k, X_i \otimes X_j) \in \mathbb{Z}_{\ge 0}$.
\end{definition}

\begin{fact}
For each $i$, let $N_i$ be the fusion matrix such that $(N_i)_{jk} = N_{ij}^k$. Then $N_i \in \mathrm{Mat}_{n\times n}(\mathbb{Z})$.
\end{fact}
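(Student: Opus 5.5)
The plan is to read off integrality directly from the definition of the fusion ring, with no reference to the $S$-matrix. By definition $(N_i)_{jk}=N_{ij}^k=\dim \mathrm{Hom}(X_k, X_i\otimes X_j)$, and a dimension of a finite-dimensional vector space is a non-negative integer. So the proof reduces to two checks: that these Hom-spaces are finite-dimensional vector spaces over the ground field $\Bbbk$ (take $\Bbbk=\mathbb{C}$), and that the integers so defined are the multiplicities in the displayed decomposition $X_i\otimes X_j\cong\bigoplus_k N_{ij}^k X_k$, so that $N_i$ is well defined.

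First I will recall the axioms of an MTC that are used. $\mathcal{C}$ is $\Bbbk$-linear, so every $\mathrm{Hom}$-set is a $\Bbbk$-vector space, and finite-dimensionality of Hom-spaces is part of the definition of a fusion (hence modular) category. $\mathcal{C}$ is also semisimple with finitely many simple objects $X_0,\dots,X_{n-1}$, and the simples satisfy Schur's lemma, $\mathrm{Hom}(X_k,X_l)\cong\delta_{kl}\,\Bbbk$ (using that $\Bbbk$ is algebraically closed). By semisimplicity, $X_i\otimes X_j\cong\bigoplus_l m_l X_l$ for some multiplicities $m_l\in\mathbb{Z}_{\ge 0}$. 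Applying $\mathrm{Hom}(X_k,-)$, which is additive, and then Schur's lemma gives
\[
\mathrm{Hom}(X_k,X_i\otimes X_j)\cong\bigoplus_l \mathrm{Hom}(X_k,X_l)^{\oplus m_l}\cong \Bbbk^{m_k}.
\]
Hence $\dim\mathrm{Hom}(X_k,X_i\otimes X_j)=m_k$. This proves both that the dimension is a non-negative integer and that it coincides with the multiplicity in the decomposition, which is therefore independent of the chosen isomorphism. Consequently every entry $(N_i)_{jk}$ lies in $\mathbb{Z}_{\ge0}\subset\mathbb{Z}$, so $N_i\in \mathrm{Mat}_{n\times n}(\mathbb{Z})$.

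There is no real obstacle here; the only point needing care is the use of Schur's lemma in the form $\mathrm{End}(X_k)=\Bbbk$, which requires the simple objects to be absolutely simple (automatic over $\mathbb{C}$). I will state this assumption explicitly rather than leave it implicit.

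Remark~\ref{rmk1} notes that the $\bar S$ of the main text is indexed by characters rather than by primaries. The integrality here holds at the level of the MTC, whose rank is the number of primaries, and that is the setting in which the Fact and its use in \Cref{mainprop} are stated.
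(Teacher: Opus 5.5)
Your proposal is correct and takes the same approach as the paper. The paper gives no separate proof: it treats the Fact as immediate from the definition $N_{ij}^k=\dim\mathrm{Hom}(X_k,X_i\otimes X_j)\in\mathbb{Z}_{\ge 0}$. Your additional check, via semisimplicity and Schur's lemma, that this dimension equals the multiplicity in $X_i\otimes X_j\cong\bigoplus_k N_{ij}^k X_k$ is sound, and simply makes explicit what the paper leaves implicit.
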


\begin{lemma}[de Boer--Goeree, \cite{DeBoer:1990em}]
\label{lemma:vf}
The ${\bar S}$-matrix simultaneously diagonalizes the fusion matrices. In particular, the eigenvalues of $N_i$ are given by
\[
\lambda_i^{(m)} = \frac{S_{im}}{S_{0m}}=\frac{\bar{S}_{im}}{\bar{S}_{0m}}, \qquad m = 0, \dots, n-1.
\]
and $\lambda_i^{(m)}$ are algebraic integers.

\end{lemma}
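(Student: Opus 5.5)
The plan is to derive all three claims of the lemma from the Verlinde formula together with the unitarity and symmetry of $S$ (equivalently of $\bar S$, see \Cref{rmk1}). The argument has three steps: turn Verlinde into a matrix identity, read off eigenvectors, then argue integrality from the characteristic polynomial.

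\textbf{Step 1: Verlinde as a diagonalization.} Write the Verlinde formula in the form
\[
N_{ij}^k=\sum_a \frac{S_{ia}S_{ja}\overline{S_{ka}}}{S_{0a}} ,
\]
where the denominator is $S_{0a}$. This is the standard form; the version quoted earlier with $S_{00}$ in the denominator should be read this way. Let $D_i=\mathrm{diag}\bigl(S_{ia}/S_{0a}\bigr)_a$. The formula then says $(N_i)_{jk}=\bigl(S D_i S^{\dagger}\bigr)_{jk}$, using $S=S^T$. Since $S$ is unitary, $S^\dagger=S^{-1}$, so
\[
N_i = S\,D_i\,S^{-1} \quad\text{for every } i .
\]
Thus a single matrix $S$ diagonalizes all the $N_i$ simultaneously.

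\textbf{Step 2: Passing to $\bar S$ and reading off eigenvalues.} Because $\bar S=S/S_{00}$ is a nonzero scalar multiple of $S$, the same conjugation works with $\bar S$ in place of $S$. The eigenvalues of $N_i$ are therefore the diagonal entries of $D_i$, namely
\[
\lambda_i^{(m)}=\frac{S_{im}}{S_{0m}}=\frac{\bar S_{im}}{\bar S_{0m}} ,
\]
where the scalar $S_{00}$ cancels in the ratio. These ratios are well defined because $S_{0m}\neq 0$ in an MTC: $S_{0m}=\dim(X_m)\,S_{00}$, and quantum dimensions are nonzero. This nonvanishing is needed even to state Verlinde in the form above. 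I expect this to be the main point needing care, since the main text exhibits admissible solutions with $S_{00}=0$. The lemma is stated in the MTC setting, so I would invoke $\dim X_m\neq0$ there.

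\textbf{Step 3: Algebraic integrality.} By the Fact stated earlier, $N_i\in\mathrm{Mat}_{n\times n}(\mathbb{Z})$. Its characteristic polynomial $\det(x\,\mathbb{1}-N_i)$ is therefore monic with integer coefficients. Each eigenvalue $\lambda_i^{(m)}$ is a root of this polynomial and hence an algebraic integer.
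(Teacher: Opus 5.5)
Your proof is correct and follows the same route as the paper: the Verlinde formula identifies the eigenvalues of $N_i$ with $S_{im}/S_{0m}$, and integrality follows because $N_i$ has a monic integer characteristic polynomial. You also make explicit several points the paper leaves implicit: the conjugation $N_i=\bar S D_i\bar S^{-1}$ obtained via unitarity, the nonvanishing $S_{0m}=d_m S_{00}\neq 0$, and the fact that the Verlinde formula quoted in the main text should have $S_{0a}$, not $S_{00}$, in the denominator.
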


\begin{proof}[Proof of \Cref{lemma:vf}]
Since $N_i$ has integer entries, its characteristic polynomial is monic with coefficients in $\mathbb{Z}$. Meaning every eigenvalue of $N_i$ 
is an algebraic integer. By the Verlinde formula, these eigenvalues are given by $\lambda_i^{(m)}$.
So $\lambda_i^{(m)}$ $\forall \ i,m \in \mathbb{Z} \cap [0,n-1]$ is an algebraic integer.
\end{proof}
\begin{remark}
%
  de Boer-Goeree in fact show that this is an algebraic integer in some abelian extension of $\mathbb{Q}$\cite{DeBoer:1990em}. The Kronecker--Weber theorem states that every finite abelian extension of $\mathbb{Q}$ is contained in a cyclotomic field. Since we will use the stronger theorem of Ng-Schauenberg stated below which establishes the precise cyclotomic extension, it suffices to show for now that the $\lambda_i^{(m)}$ are algebraic integers. 
\end{remark}

\begin{theorem}[Ng--Schauenberg, \cite{Ng:2007}]
\label{thm:ns}
Let $\rho: SL(2, \mathbb{Z}) \to GL(n, \mathbb{C})$ be the representation associated with $\mathcal{C}$, and let $\bar{N}$ be the order of the matrix $\bar{T}$. Then
\begin{enumerate}
    \item The kernel of $\rho$ is a congruence subgroup of level $\bar{N}$.
    \item The entries of the matrices $\bar{S}$ and $\bar{T}$ lie in the cyclotomic field $\mathbb{Q}(\zeta_{\bar{N}})$.
\end{enumerate}
\end{theorem}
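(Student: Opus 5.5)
The plan follows Ng--Schauenburg~\cite{Ng:2007}. I would prove part~2 using only Galois theory and the Verlinde formula, and obtain part~1 from the theory of generalized Frobenius--Schur (FS) indicators. The only input from \Cref{lemma:vf} is that the numbers $\lambda_i^{(m)}$ are algebraic integers generating an abelian extension $L$ of $\mathbb{Q}$.

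\emph{Step 1 (a Galois action by signed permutations).} Write $d_j=\bar S_{0j}=\lambda_j^{(0)}$. Then $\bar S_{ij}=\lambda_i^{(j)}d_j$, so every entry of $\bar S$ lies in $L$, and hence in some cyclotomic field $\mathbb{Q}(\zeta_M)$. Take $\sigma\in\mathrm{Gal}(\mathbb{Q}(\zeta_M)/\mathbb{Q})$. The Galois conjugates of the joint eigenvalues of the commuting integer matrices $N_i$ are again joint eigenvalues, so $\sigma(\lambda_i^{(m)})=\lambda_i^{(\hat\sigma(m))}$ for a permutation $\hat\sigma$ of the labels. Applying $\sigma$ to the unitarity relation $\sum_m |S_{0m}|^2=1$ and comparing rows, I would deduce $\sigma(S_{im})=\epsilon_\sigma(m)\,S_{i\hat\sigma(m)}$ with signs $\epsilon_\sigma(m)=\pm1$. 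In matrix form this reads $\sigma(S)=S\,G_\sigma$, where $G_\sigma$ is a signed permutation matrix.

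\emph{Step 2 (congruence property, part~1).} I expect this to be the main obstacle. The aim is to show that $\bar T^{\bar N}=1$ forces $\rho$ to factor through $SL(2,\mathbb{Z}/\bar N)$. I would use the generalized FS indicators $\nu_{m,l}(X_k)$. These have a categorical definition as traces of certain endomorphisms, and they also admit the closed formula
\[
\nu_{m,l}(X_k)=\frac{1}{\dim\mathcal C}\sum_{i,j}N_{ij}^{k}\,d_id_j\Big(\frac{\theta_i}{\theta_j}\Big)^{m}\cdots .
\]
The categorical definition gives them equivariance under $SL(2,\mathbb{Z})$ acting on the pair $(m,l)$. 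The explicit formula shows that they depend on $(m,l)$ only modulo $\bar N$. Combining the two, the words in $s,t$ that generate the normal closure of $t^{\bar N}$, together with the standard congruence relations, act trivially. By the Wohlfahrt-type criterion, $\ker\rho$ then contains $\Gamma(\bar N)$. If I get stuck here, my first fallback is to verify the identity $\rho(t^{a}st^{a^{-1}}st^{a}s^{-1})=\text{id}$ for $a$ coprime to $\bar N$ directly from the indicator formula.

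\emph{Step 3 (descending to $\mathbb{Q}(\zeta_{\bar N})$, part~2).} The entries of $\bar T$ are powers of $\zeta_{\bar N}$ by definition. For $\bar S$, take $\sigma$ fixing $\zeta_{\bar N}$, say $\sigma=\sigma_a$ with $a\equiv1\pmod{\bar N}$. By Step~2, $\rho$ is defined over $SL(2,\mathbb{Z}/\bar N)$, and the Galois-conjugate projective representation $\sigma\circ\rho$ equals $\rho$ twisted by the element $\mathrm{diag}(a,a^{-1})$. That element is trivial modulo $\bar N$, so $G_\sigma$ must commute with $\bar T$ and with $\bar S$ up to scalar. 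Evaluating at the $(0,0)$ entry, where $\bar S_{00}=1$ is fixed by $\sigma$, pins the scalar to $1$ and gives $\sigma(\bar S)=\bar S$. Galois theory then places the entries of $\bar S$ in $\mathbb{Q}(\zeta_{\bar N})$.

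Finally, for \Cref{mainprop}, these entries are algebraic integers, since they are products $\lambda_i^{(j)}\lambda_j^{(0)}$ of algebraic integers. The ring of integers of $\mathbb{Q}(\zeta_{\bar N})$ is $\mathbb{Z}[\zeta_{\bar N}]$, which completes the argument.
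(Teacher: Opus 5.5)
The paper gives no argument of its own here; it defers to Ng--Schauenburg (Theorem~6.8, Propositions~5.7--5.8). Measured against that, Step~1 and your choice of generalized indicators are the right ingredients, but there is a genuine gap exactly where part~2 is decided. In Step~3 you assert, without derivation, that $\sigma\circ\rho$ equals $\rho$ twisted by $\mathrm{diag}(a,a^{-1})$, and this assertion carries the whole content. For general $a$ it is not even consistent. The automorphism $\sigma_a$ sends $\bar T\mapsto\bar T^{a}$, whereas conjugation by $\mathrm{diag}(a,a^{-1})$ sends $t\mapsto t^{a^2}$. The correct Galois-symmetry statement (Dong--Lin--Ng) concerns $\sigma^2$, and it is proved with the rationality as input. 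For $a\equiv1\pmod{\bar N}$ your assertion reads $\sigma\circ\bar\rho=\bar\rho$, i.e.\ $\sigma(\bar S)=\bar S$, which is the conclusion itself. Nor does it follow from the congruence property. A representation of $SL(2,\mathbb Z)$ factoring through $SL(2,\mathbb Z/\bar N)$ need not be realizable over $\mathbb Q(\zeta_{\bar N})$. For example, $SL(2,\mathbb F_7)$ has $6$-dimensional irreducibles whose characters take the values $\pm\sqrt2\notin\mathbb Q(\zeta_7)$ on elements of order $8$. Automorphisms fixing $\zeta_7$ therefore carry them to inequivalent representations. What is missing is an actual reason why the signed permutation $G_\sigma$ is trivial when $\sigma$ fixes $\zeta_{\bar N}$. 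In Ng--Schauenburg this comes out of the indicator machinery itself. The indicators are $\mathbb Q(\zeta_{\bar N})$-valued, with $\bar N$ the Frobenius--Schur exponent. Equivariance then pins down the normalized $S$-matrix of $Z(\mathcal C)$ as the solution of a linear system with coefficients in $\mathbb Q(\zeta_{\bar N})$. Since that matrix is $\bar S\otimes\overline{\bar S}$, its $((i,0),(j,0))$ entry is $\bar S_{ij}$.

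Step~2 is also not a working argument as written. The generalized indicators are equivariant for the modular representation of the Drinfeld center $Z(\mathcal C)$ on $K_0(Z(\mathcal C))\otimes\mathbb C$, not for $\rho$ itself. You are missing two ingredients. The first is (i) that the indicator functionals separate points of that space. The second is (ii) the descent through $Z(\mathcal C)\simeq\mathcal C\boxtimes\mathcal C^{\mathrm{rev}}$, i.e.\ $\rho_{Z(\mathcal C)}\cong\rho\otimes\overline{\rho}$; this is why only the projective kernel of $\bar\rho$ is controlled. Given (i), periodicity mod $\bar N$ yields $\Gamma(\bar N)\subseteq\ker\rho_{Z(\mathcal C)}$ at once, because $\Gamma(\bar N)$ is precisely the kernel of the action on $(\mathbb Z/\bar N)^2$. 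No Wohlfahrt criterion is involved, and that criterion presupposes congruence anyway. The normal closure of $t^{\bar N}$ has infinite index for $\bar N\ge 6$, so it is not the relevant object. Your fallback identity is false. The element $t^ast^{a^{-1}}st^as^{-1}$ maps to $\mathrm{diag}(a,a^{-1})$ mod $\bar N$. If it acted by a scalar, you would get $\bar T^{a^2}=\bar T$ for all units $a$, forcing $\bar N\mid 24$. This already fails for Lee--Yang/Fibonacci, where $\bar N=5$. The relation congruence actually gives, and the one used to characterize level-$\bar N$ kernels, is $\rho(\mathfrak d_a)\rho(t)\rho(\mathfrak d_a)^{-1}=\rho(t)^{a^2}$ with $\mathfrak d_a=t^ast^{a^{-1}}st^as^{-1}$.
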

\begin{proof}
    We refer the reader to Theorem 6.8, and Propositions 5.7 and 5.8 of \cite{Ng:2007}.
\end{proof}

\begin{proposition}
\label{mainthm2}
$
\lambda_i^{(m)} \in \mathbb{Z}[\zeta_{\bar N}]$.   
\end{proposition}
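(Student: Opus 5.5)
The plan is to combine \Cref{lemma:vf} with \Cref{thm:ns}. The key fact is that an element of $\mathbb{Q}(\zeta_{\bar N})$ which is an algebraic integer lies in the ring of integers of $\mathbb{Q}(\zeta_{\bar N})$, and this ring is $\mathbb{Z}[\zeta_{\bar N}]$.

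First, by \Cref{lemma:vf}, each $\lambda_i^{(m)}$ is an eigenvalue of the integer matrix $N_i$. It is therefore a root of the monic integer polynomial $\det(x-N_i)$, and so it is an algebraic integer.

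Second, I show $\lambda_i^{(m)}\in\mathbb{Q}(\zeta_{\bar N})$. By part 2 of \Cref{thm:ns}, all entries $\bar S_{im}$ lie in $\mathbb{Q}(\zeta_{\bar N})$. Since $\lambda_i^{(m)}=\bar S_{im}/\bar S_{0m}$, it remains to check that $\bar S_{0m}\neq 0$. In an MTC, $\bar S_{0m}=S_{0m}/S_{00}=d_m$ is the quantum dimension of $X_m$, which is nonzero. One can also see this without quantum dimensions: the Verlinde formula makes the columns of $S$ simultaneous eigenvectors of the fusion matrices, and the unitarity of $S$ forces $S_{0m}\ne 0$. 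Because $\mathbb{Q}(\zeta_{\bar N})$ is a field, the quotient lies in it.

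Third, I invoke the classical theorem that the ring of integers $\mathcal{O}_{\mathbb{Q}(\zeta_{\bar N})}$ equals $\mathbb{Z}[\zeta_{\bar N}]$ (see, e.g., Washington, Theorem 2.6). An element that is both an algebraic integer and in $\mathbb{Q}(\zeta_{\bar N})$ lies in $\mathcal{O}_{\mathbb{Q}(\zeta_{\bar N})}$ by definition. This gives $\lambda_i^{(m)}\in\mathbb{Z}[\zeta_{\bar N}]$.

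The main obstacle is the nonvanishing of $\bar S_{0m}$, since all the other steps are direct citations. I would handle it via quantum dimensions: $d_m\neq0$ because $X_m$ is simple and has a dual, so that $d_m d_{m^*}=\dim\mathrm{Hom}$-based traces are nonzero. If this proves awkward, I would fall back on the unitarity of $S$ together with the eigenvector argument: a zero entry $S_{0m}$ would make the $m$-th column fail to give a well-defined character of the fusion ring.

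Finally, the integrality of $\lambda_i^{(0)}=\bar S_{i0}$ gives \Cref{mainprop} for the entries $\bar S_{i0}$, and by the symmetry of $\bar S$ also for $\bar S_{0i}$. The general entries $\bar S_{ij}=\lambda_i^{(j)}\bar S_{0j}$ are then products of elements of $\mathbb{Z}[\zeta_{\bar N}]$, and hence lie in $\mathbb{Z}[\zeta_{\bar N}]$.
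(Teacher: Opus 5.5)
Your proposal is correct and follows the paper's proof, which also combines the de Boer--Goeree lemma (each $\lambda_i^{(m)}$ is an algebraic integer) with the Ng--Schauenburg theorem ($\bar S$ has entries in $\mathbb{Q}(\zeta_{\bar N})$). You go further than the paper's one-line argument by making explicit the two steps it leaves implicit: that $\bar S_{0m}=d_m\neq 0$, and that the ring of integers of $\mathbb{Q}(\zeta_{\bar N})$ is $\mathbb{Z}[\zeta_{\bar N}]$.
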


\begin{proof}[Proof of \Cref{mainthm2}]
Combining \Cref{lemma:vf} and \Cref{thm:ns}, it follows that
\[
\lambda_i^{(m)}  \in \mathbb{Z}[\zeta_{\bar N}].
\]
\end{proof}
 We can now put together the proof of \Cref{mainprop}.
\begin{proof}[Proof of \Cref{mainprop}]
\label{mainproof}
$\bar{S}_{ij} = \lambda_i^{(j)}\, \lambda_{j}^{(0)}$. 
Since $\mathbb{Z}[\zeta_{\bar N}]$ is a ring (closed under multiplication), we see immediately that
\[
\bar{S}_{ij} \in \mathbb{Z}[\zeta_{\bar N}]\ .
\]
\end{proof}

\bibliographystyle{jhep}
\bibliography{master}  
\end{document}